\DeclareMathOperator{\spann}{span}
\newlength{\arrow}
\newcommand*{\N}{\mathbb{N}}
\newcommand*{\R}{\mathbb{R}}
\newcommand*{\Q}{\mathbb{Q}}
\newtheorem{theorem}{Theorem}
\theoremstyle{definition} 
\title{The Point-to-Set Principle and the Dimensions of Hamel Bases}
\author{Jack H. Lutz\thanks{
Department of Computer Science, Iowa State University, Ames, IA 50011 USA. Email: \href{mailto:lutz@iastate.edu}{lutz@iastate.edu} Research supported in part by National Science Foundation grants 1545028 and 1900716.} 
\and Renrui Qi\thanks{School of Mathematics and Statistics, Victoria University of Wellington, Wellington, New Zealand.  Email: \href{mailto:renrui.qi@sms.vuw.ac.nz}{renrui.qi@sms.vuw.ac.nz}}
\and Liang Yu\thanks{Department of Mathematics, Nanjing University, Jiangsu Province 210093, P.R. of China. Email: \href{mailto:yuliang.nju@gmail.com}{yuliang.nju@gmail.com}
Research supported in part by the National Science Foundation of China, No. 12025103.}
}
\begin{document}
\maketitle

\begin{abstract}
We prove that, for every $0 \leq s \leq 1$, there is a Hamel basis of the vector space of reals over the field of rationals that has Hausdorff dimension $s$.

The logic of our proof is of particular interest.  The statement of our theorem is classical; it does not involve the theory of computing.  However, our proof makes essential use of algorithmic fractal dimension--a computability-theoretic construct--and the point-to-set principle of J. Lutz and N. Lutz (2018).
\end{abstract}

\section{Introduction}
This brief paper is an intellectual export from the theory of computing to classical mathematics, i.e., mathematics not ostensibly involving the theory of computing.  This introduction describes our main theorem and then explains how its proof uses computability theory.

Two fundamental theorems of linear algebra state that every vector space has a basis and that any two bases of a vector space have the same cardinality, which is called the \emph{dimension} of the vector space.  When the vector space has finite dimension, the proofs of these facts are completely constructive and are standard undergraduate fare \cite{apost1969,halmos2017finite}. However, in 1905, Hamel \cite{hamel1905basis} used the axiom of choice to prove that these two theorems hold for all vector spaces.\footnote{In 1984, confirming a 1908 conjecture of Zermelo \cite{zermelo1908}, Blass \cite{blass1984existence} proved that the existence of bases for all vector spaces implies the axiom of choice.} Accordingly, infinite bases of vector spaces are traditionally called \emph{Hamel bases}.

The canonical case of Hamel bases is the vector space $\R$ of real numbers over the field $\Q$ of rational numbers.  This case is the topic of the present paper, so here as in many papers, all ``Hamel bases'' are bases of $\R$ over $\Q$.  Hamel \cite{hamel1905basis} showed that every Hamel basis must have the cardinality of the continuum.  Sierpinski \cite{sierpinski1920question} showed that every Hamel basis has inner Lebesgue measure 0, whence every measurable Hamel basis has measure 0.  Jones \cite{jones1942measure} showed that the Cantor middle-thirds set contains a Hamel basis (which thus has Lebesgue measure 0) and that no Hamel basis is analytic, i.e., $\mathbf{ \Sigma^1_1}$.  Numerous other investigations of Hamel bases have appeared in the literature.

In this paper we investigate the Hausdorff dimensions of Hamel bases.  Our main theorem says that, for every real number $s \in [0,1]$, there is a Hamel basis with Hausdorff dimension exactly $s$.

Although our main theorem says nothing about the theory of computing, our proof of it uses algorithmic fractal dimension, which is a computability-theoretic construct.  Specifically, the \emph{algorithmic dimension} of a number $x \in \R$ is
\begin{equation} 
\dim(x) = \liminf_{n \rightarrow \infty} K(x[0..n-1])/n,
\end{equation}where $x[0..n-1]$ is the string consisting of the first n bits of the binary expansion of $x$, and $K(x[0..n-1])$ is the Kolmogorov complexity of this string \cite{lutz2003dimensions,jMayo02}.  Since $K(x[0..n-1])$ is the algorithmic information content of $x[0..n-1]$ (and is where computability theory comes into the picture \cite{shen2017kolm,downey2010algorithmic,LiVit19}), this says that $dim(x)$ is the {\it lower asymptotic algorithmic information density} of the real number $x$.  This and related algorithmic fractal dimensions are discussed in the recent surveys \cite{LuMay21,LuLu2020}.

The bridge between the above algorithmic dimension and our investigation of the Hausdorff dimensions of Hamel bases is the Point-to-Set Principle of J. Lutz and N. Lutz \cite{LutLut18}.  This principle gives complete pointwise characterizations of various classical fractal dimensions in terms of relativizations of their algorithmic counterparts.  Specialized to the case of Hausdorff dimension in $\R$, this principle says that, for every set $E \subseteq \R$,

\begin{equation}
    \dim_H(E) = \min_{A \subseteq \N} \sup_{x \in E} \dim^A(x).
\end{equation}That is, the \emph{classical} Hausdorff dimension of a set $E$ is the minimum, for all oracles $A$, of the supremum of the algorithmic dimensions, relative to $A$, of the individual points in $E$.

The Point-to-Set Principle is especially useful for proving otherwise difficult lower bounds on the Hausdorff dimensions of sets.  It implies that, to prove a lower bound on the Hausdorff dimension of a set $E$, it suffices to let $A$ be an arbitrary oracle and prove a corresponding lower bound on the algorithmic dimension, relative to $A$, of a judiciously chosen point in $E$.  This ability to reason from the relativized dimensions of single points to the dimensions of entire sets has recently been used to prove several new theorems about classical fractal dimensions.  Examples include stronger lower bounds on the Hausdorff dimensions of generalized Furstenberg sets \cite{LuStu20}; extension of the fractal intersection formula from Borel sets to arbitrary sets \cite{DBLP:journals/toct/Lutz21}; the extension of Marstrand's projection theorem from analytic sets to arbitrary sets, provided that their Hausdorff and packing dimensions coincide \cite{LuStu18}; a proof that, if $V = L$, then the maximal thin co-analytic set has Hausdorff dimension $1$\cite{Slaman}; a further relaxation of the hypothesis of Marstrand's projection theorem \cite{DBLP:conf/stacs/Stull22}; and an improved bound on the Hausdorff dimensions of pinned distance sets \cite{DBLP:journals/corr/abs-2207-12501}.

An encouraging aspect of the above successes of the Point-to-Set Principle is that the proofs do not all look alike.  The principle has turned out to be quite flexible, allowing investigators to invoke it in arguments appropriate to various settings.  In the present paper we use it in the following way.  Given a target dimension $s \in [0,1]$, we construct a Hamel basis $B$ of $\R$ over $\Q$ by a transfinite recursion, putting a single real into $B$ at each stage. At even stages, we add points to $B$ that enable us to use the Point-to-Set Principle to conclude that $B$ has Hausdorff dimension $s$.  At odd stages, we ensure that $B$ spans $\R$.  The details appear in Section 3.

\section{Preliminaries}

All logarithms here are base-$2$. For $B \subseteq \R$, we write $\spann(B)$ for the linear span of $B$ over $\Q$, i.e., the set of all linear combinations \[ x = \sum_{u \in I} g(u)u,\] where $I \subseteq B$ is finite and $g: I \rightarrow \Q \setminus \{0\}$.

We refer to standard set theory texts, e.g., \cite{halmos2017naive,roitman1990introset,moschovakis2005notes} or early sections of more advanced texts, for background on ordinal numbers and transfinite recursion. We write $\omega$ for the first infinite ordinal and $2^{\aleph_0}$ for the cardinality of the continuum.

For each $s \in (0,1]$, we define a Cantor set $C_s \subseteq [0,1]$ that we use in the proof of our main theorem in Section 3. A set $C_0$ for the case $s=0$ is defined separately below. The definitions and key properties of these sets appear in this preliminary section because they are well understood.

Fix $s\in (0,1]$, and let $r=2^{-\frac{1}{s}}$, noting that $r\in (0,\frac{1}{2}]$. For each $i \in \mathbb{N}$, let 
\[
    r_i= 
\begin{cases}
    r & \text{if } r\in (0, \frac{1}{2})\\
    \frac{1}{2}-2^{-(i+2)}              & \text{if } r=\frac{1}{2},
\end{cases}
\]
noting that $\mathbf{r}=(r_i | i\in \mathbb{N})$ is a sequence in $(0,\frac{1}{2})$ that converges to $r$ in any case.

Define a family $\{I_w | w\in \{0,1\}^*\}$ of closed intervals $I_w \subseteq [0,1]$ by the following recursion. 
\begin{enumerate}[(i)]
    \item $I_\lambda =[0,1]$. (We write $\lambda$ for the empty string.)
    \item If $I_w=[a,b]$, then 
\[I_{w0} = [a, a+r_{|w|}(b-a)]\]
and 
\[I_{w1} = [b-r_{|w|}(b-a),b].\]
\end{enumerate}
For each $k\in \mathbb{N}$, let 
\[C_{s,k} = \bigcup_{w\in \{0,1\}^k}I_w,\]
and let 
\[C_s = \bigcap_{k=0}^\infty C_{s,k}.\]
Note that $C_s$ is a "Cantor middle $1-2r$ set" and that $C_\frac{1}{\log 3}$ is the familiar Cantor middle-thirds set. In any case, $C_s$ is a self-similar fractal whose Hausdorff dimension, by standard methods, is 
\[\dim_H(C_s)=\frac{-1}{\log r}=s.\]
Moreover, each point $x\in C_s$ is specified in the obvious manner by a unique \emph{coding sequence} $T_x \in \{0,1\}^\omega$, and the main theorem of Lutz and Mayodormo\cite{jLutMay08} (relativized to an oracle for the real number $r$) tells us that 
\[\dim^r(x)=s \dim^r(T_x)\]
holds for each $x\in C_s$. In particular, since every sequence $T$ in the set $\text{RAND}^r$ of all sequences that are algorithmically random relative to $r$ has dimension 1 relative to $r$, and since the set $\text{RAND}^r$ has the cardinality of the continuum, it follows that the set 
\[D^r = \{x\in C_s | \dim^r(x) = s\}\]
has the cardinality of the continuum. This entire argument relativizes, so for any oracle $A\subseteq \mathbb{N}$ from which $r$ can be computed, the set 
\[D^A = \{x\in C_s | \dim^A(x) = s\}\]
has the cardinality of the continuum.

The remaining property of $C_s$ that is needed for our argument in section 4 is that 
\[\spann(C_s)=\mathbb{R}.\]
This follows immediately from the main result of Cabrelli, Hare, and Molter \cite{Cabrelli2002SumsOC}, which tells us that there exists $N$ such that the $N$-fold sumset 
\[C_s + ... + C_s = \{x_1 + ... +x_N| \text{each } x_i\in C_s\}\]
contains an interval.

We now turn to the case $s=0$. For each binary sequence ${\mathbf{b} = (b_i | i\in \mathbb{N}) \in \{0,1\}^\omega}$, let 
\[\text{supp}(\mathbf{b}) = \{i\in \mathbb{N} | b_i =1\}\]
be the \emph{support} of $\mathbf{b}$, let 
\[\underline{den}(\mathbf{b})= \liminf_{n \to \infty} \frac{|\text{supp}(\mathbf{b}) \cap \{0,...,n-1\}|}{n} \]
be the \emph{lower asymptotic density} of $\mathbf{b}$, and let 
\[x(\mathbf{b})= \sum_{i=0}^\infty 2^{-(i+1)}b_i\]
be the real number \emph{represented by} $\mathbf{b}$.

Let 
\[C_0=\{x(\mathbf{b})|\mathbf{b}\in \{0,1\}^\omega \text{ and } \underline{den}(b)=0\}.\]
It is well known \cite{lutz2003dimensions} that $\dim(x)=0$ for all $x\in C_0$, whence $\dim_H(C_0)=0$. It is clear that $C_0$ has the cardinality of the continuum.

To see that $\spann(C_0)=\mathbb{R}$, let $x\in [0,1]$. It suffices to show that there exists $x_0,x_1 \in C_0$ such that $x_0+x_1=x$. For each $k\in \mathbb{N}$, let $m_k=k\cdot 2^k$, noting that $m_k=o(m_{k+1}-m_k)$ as $k \to \infty$. Partition $\mathbb{N}$ into the two sets
\[ J_0 = \bigcup_{k=0}^\infty [m_{2k}, m_{2k+1}),\]
\[ J_1 = \bigcup_{k=0}^\infty [m_{2k+1}, m_{2k+2}),\]
where the "intervals" here are the obvious sets of natural numbers. Fix a binary sequence
\[\mathbf{b} = (b_i | i\in \mathbb{N})\in \{0,1\}^\omega\]
such that $x(\mathbf{b})=x$. For each $a\in\{0,1\}$, define the binary sequence
\[\mathbf{b}^{(a)} = (b_i^{(a)} | i\in \mathbb{N})\in \{0,1\}^\omega\]
by
\[
    b_i^{(a)}= 
\begin{cases}
    b_i & \text{if } i\in J_a\\
    0              & \text{if } i\in J_{1-a}
\end{cases}
\]
for all $i\in \mathbb{N}$, and let $x_a=x(\mathbf{b}^{(a)})$. Then $x_0, x_1 \in C_0$ and $x_0+x_1=x$, confirming that $\spann(C_0)=\mathbb{R}$.

In summary, for each $s\in [0,1]$, we have by known methods a set $C_s \subseteq [0,1]$ with the following three properties.
\begin{enumerate}[(i)]
    \item $\dim_H(C_s)=s.$
    \item For all oracles $A\subseteq \mathbb{N}$ that compute $s$, the set 
    \[D^A=\{x\in C_s | \dim^A(x)=s\}\]
    has the cardinality of the continuum.
    \item $\spann(C_s)=\mathbb{R}$.
\end{enumerate}

\section{Dimensions of Hamel Bases}
In this section we prove our main theorem, which is the following.

\begin{theorem}
\label{Thm3.1}
For every $s \in [0,1]$ there is a Hamel basis B of $\R$ over $\Q$ such that $\dim_H(B) = s$.
\end{theorem}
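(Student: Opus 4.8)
The plan is to fix a suitable Cantor set $C_r$ and to build the Hamel basis $B$ \emph{inside} it by a transfinite recursion of length $\omega_1$, using the Continuum Hypothesis to dovetail two requirements: a dimension requirement, handled at even stages, which forces $\dim_H(B)=s$; and a spanning requirement, handled at odd stages, which forces $\spann(B)=\R$. For the main case $s\in(0,1)$ I would set $r=2^{-1/s}\in(0,\frac{1}{2})$, so that $\dim_H(C_r)=-1/\log r=s$ and, by the sumset discussion, $\spann(C_r)=\R$. The point of choosing $r$ this way is that the upper bound $\dim_H(B)\le s$ becomes immediate from monotonicity of Hausdorff dimension, since the construction keeps $B\subseteq C_r$; no appeal to the Point-to-Set Principle is needed for this direction. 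The boundary case $s=1$ needs a separate but easier treatment: every subset of $\R$ already has Hausdorff dimension at most $1$, so the upper bound is automatic, and I would instead draw the even-stage points from the full-measure set $\DIM_1^{A}$ of reals that are dimension-one relative to the oracle $A$.

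For the recursion I would invoke CH to fix enumerations $\scrP(\N)=\{A_\alpha : \alpha<\omega_1\}$ and $C_r=\{c_\alpha : \alpha<\omega_1\}$, and build an increasing chain of linearly independent sets $B_\alpha\subseteq C_r$ for $\alpha<\omega_1$, adjoining at most one point per stage and setting $B=\bigcup_{\alpha<\omega_1}B_\alpha$. At an even stage $\alpha$, handling the oracle $A_\alpha$, the relativized form of the result of \cite{jLutMay08} guarantees that $C_r\cap\DIM_s^{A_\alpha}$ has the cardinality of the continuum; since fewer than $\omega_1$, hence only countably many, points have been added so far, $\spann(B_{<\alpha})$ is countable, and I may choose $b_\alpha\in(C_r\cap\DIM_s^{A_\alpha})\setminus\spann(B_{<\alpha})$ and adjoin it. This preserves linear independence while placing in $B$ a witness of dimension exactly $s$ relative to $A_\alpha$. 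At an odd stage $\alpha$, handling $c_\alpha$, I adjoin $c_\alpha$ itself if $c_\alpha\notin\spann(B_{<\alpha})$ and otherwise do nothing; this forces $C_r\subseteq\spann(B)$ and hence $\spann(B)\supseteq\spann(C_r)=\R$.

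For the verification, at the end $B$ is linearly independent and spans $\R$, so it is a Hamel basis; being a subset of $C_r$, monotonicity gives $\dim_H(B)\le s$. For the lower bound I would apply the Point-to-Set Principle: every oracle $A$ equals some $A_\alpha$, and the witness $b_\alpha\in B$ satisfies $\dim^A(b_\alpha)=s$, so $\sup_{x\in B}\dim^A(x)\ge s$; taking the minimum over all $A$ yields $\dim_H(B)=\min_{A}\sup_{x\in B}\dim^A(x)\ge s$, whence $\dim_H(B)=s$.

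The step I expect to be the crux is securing this lower bound for \emph{all} continuum-many oracles simultaneously, and it is precisely here that CH does the essential work. CH lets a single recursion of length $\omega_1$ enumerate every oracle while guaranteeing that at each stage only countably many points have accumulated, so that the continuum-sized set $C_r\cap\DIM_s^{A_\alpha}$ always protrudes beyond the countable span $\spann(B_{<\alpha})$ and a fresh, independent witness can be found. Without CH the oracle enumeration could outrun $\omega_1$, the accumulated span could already exhaust $\R$ before all oracles are treated, and this selection would break down; the remaining bookkeeping (linear independence, the odd-stage spanning argument, and the $s=1$ special case) is routine by comparison.
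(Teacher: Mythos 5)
Your proof is correct and follows the paper's architecture in all essentials: a transfinite recursion of length $\omega_1$ inside $C_r$ with $r = 2^{-1/s}$, even stages planting dimension witnesses drawn from the continuum-sized sets $C_r \cap \DIM_s^{A}$ (the relativized self-similar fractal result), odd stages securing $\spann(B) = \R$ via $\spann(C_r) = \R$, monotonicity of Hausdorff dimension for the upper bound, and the Point-to-Set Principle for the lower bound. You diverge from the paper in one genuine respect: the paper does not run the even stages against the raw CH-enumeration of $\scrP(\N)$. Instead it invokes an equivalent of CH (an exercise in Odifreddi) yielding a $\leq_T$-nondecreasing, cofinal sequence $(A_\xi \mid \xi < \omega_1)$ of oracles; at even stage $\xi$ it adds a point of $D^{A_\xi}$, and for an arbitrary oracle $A$ it finds an even $\xi$ with $A \leq_T A_\xi$ and uses monotonicity of relativized dimension under Turing reducibility to conclude $\dim^A(u_\xi) \geq \dim^{A_\xi}(u_\xi) = s$. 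Your version handles every oracle $A_\alpha$ directly at its own dedicated even stage, obtaining an exact witness with $\dim^{A_\alpha}(b_\alpha) = s$, which bypasses both the tower lemma and the Turing-monotonicity fact --- a genuine simplification, since cardinality-of-the-continuum versus countable-span is all that is needed at each stage. Your odd-stage bookkeeping is likewise cleaner: by pairing stage $\alpha$ with the point $c_\alpha$ and adjoining it exactly when $c_\alpha \notin \spann(B_{<\alpha})$, the inclusion $C_r \subseteq \spann(B)$ is immediate, whereas the paper takes the least unspanned element of $C_r$ at each odd stage and must then argue that $E = C_r \setminus \spann(B)$ has no least element in the wellordering. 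Finally, your separate treatment of $s = 1$ is defensible and, if anything, more careful than the paper's: the paper applies $r = 2^{-1/s}$ at $s = 1$ even though $r = \frac{1}{2}$ falls outside its stated range $(0,\frac{1}{2})$; this is harmless, since $C_{1/2} = [0,1]$ and the same argument goes through verbatim, so your special case could in fact be folded back into the main construction.
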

\begin{proof}

Let $s\in [0,1]$, and define the set $C_s$ as in Section 2. Let 
\[(x_\alpha | \alpha < 2^{\aleph_0})\]
be a wellordering of $C_s$, and let 
\[((A_\beta, y_\beta)| \beta < 2^{\aleph_0})\]
be a wellordering of the set
\[D = \{(A,y) \in \mathcal{P}(\mathbb{N}) \times C_s | s \leq_T A \text{ and }\dim^A(y)=s\}.\]
Define the sequence
\[(u_\gamma | \gamma < 2^{\aleph_0})\]
of real numbers by the following transfinite recursion. Given $\gamma < 2^{\aleph_0}$, let ${B_\gamma = \{u_\delta| \delta < \gamma\}}$. Write $\gamma = \xi + k$, where $\xi$ is 0 or a limit ordinal and $k\in \mathbb{N}$. Call $\gamma$ \emph{even} or \emph{odd} if $k$ is even or odd, respectively.
\begin{enumerate}[(i)]
    \item If $\gamma = \xi + 2k$ is even, let 
    \[u_\gamma = y_\beta,\]
    where $\beta$ is the least ordinal such that $A_\beta = A_{\xi + k}$ and $y_\beta \not \in \spann(B_\gamma)$.
    \item If $\gamma$ is odd, let
    \[u_\gamma = x_\alpha\]
    be the first element of $C_s \setminus \spann(B_\gamma)$.
\end{enumerate}

To see that this recursion is well defined, let $\gamma < 2^{\aleph_0}$. Then $|B_\gamma|< 2^{\aleph_0}$, so $|\spann(B_\gamma)| < 2^{\aleph_0}$. Since $|D|=|C_s|= 2^{\aleph_0}$, it follows that $u_\gamma$ exists.

Let 
\[B=\{u_\gamma | \gamma < 2^{\aleph_0} \}.\]
The rest of this proof establishes that $B$ is a Hamel basis of $\mathbb{R}$ over $\mathbb{Q}$ and that $\dim_H(B)=s$.

We first show that $\spann(B)=\mathbb{R}$. For this, since we saw in Section 2 that $\spann(C_s)=\mathbb{R}$, it suffices to show that $C_s \subseteq \spann(B)$, i.e., that the set ${E=C_s \setminus \spann(B)}$ is empty. To this end, let $x$ be a lower bound of $E$ in our wellordering of $C_s$. That is, let $x=x_\alpha$, where $\alpha < 2^{\aleph_0}$ and $x_{\alpha'} \in \spann(B)$ holds for all $\alpha' < \alpha$. Since $\alpha < 2^{\aleph_0} = |B|$ and only finitely many elements of $B$ are required to put each $\alpha'$ into $\spann(B)$, it follows that there exists $\gamma < 2^{\aleph_0}$ such that 
\[\{x_{\alpha'} | \alpha' < \alpha \} \subseteq \spann(B_\gamma).\]
Moreover, we can insist here that $\gamma$ be odd. Then $u_\gamma = x_\alpha = x$, so ${x\in B \subseteq \spann(B)}$, so $x\not \in E$. We have now shown that no lower bound of $E$ is an element of $E$, i.e., that $E$ has no least element in our wellordering. This implies that $E= \emptyset$, completing our proof that $\spann(B) = \mathbb{R}$.

Since our construction ensures that $u_\gamma \not \in \spann(B_r)$ holds for all $\gamma < 2^{\aleph_0}$, we now have that $B$ is a Hamel basis of $\mathbb{R}$ over $\mathbb{Q}$.

We finish this proof by showing that $\dim_H(B)=s$. It is clear that ${\dim_H(B) \leq s}$, since $B \subseteq C_s$ and $\dim_H(C_s)=s$. Hence it suffices to show that 
\[ s \in (0,1] \Rightarrow \dim_H(B) \geq s.\]
For this, assume that $s\in (0,1]$, and let $A\subseteq \mathbb{N}$ be an oracle such that $s \leq_T A$. We saw in Section 2 that the set 
\[D^A = \{y \in C_s | \dim^A(y)=s\}\] 
has the cardinality of the continuum, so there exist ordinals $\gamma$ and $\beta$ such that $A_\beta = A$ and $y_\beta = u_\gamma \in B$. This implies that $\dim^A(y_\beta)=s$. Writing $u(A) = y_\beta$ here, it follows by the Point-to-Set principle (together with the obvious fact that $\dim^C(x) \leq \dim^D(x)$ holds whenever $D \leq_T C)$ that 
\begin{align*}
        \dim_H(B) &= \min_{A \subseteq \N} \sup_{u\in B} \dim^A(u) \\
        &=\min_{\substack{A \subseteq \N \\ s\leq_T A}} \sup_{u\in B} \dim^A(u)\\
        &\geq \min_{\substack{A \subseteq \N \\ s\leq_T A}} \dim^A(u(A))\\
        &= s.
\end{align*}
\end{proof}

\section{Conclusion}
Orponen \cite{orp2021} has already found classical proofs of two projection theorems of N. Lutz and Stull \cite{LuStu18} that were first proven via the Point-to-Set Principle, and he generalized these theorems in the process.  We conjecture that our theorem on Hamel bases also admits a classical proof.  More generally, we look forward to a better understanding of the power and limitations of computability-theoretic methods for discovering proofs of new theorems in classical mathematics.
\section*{Acknowledgment}
We thank two anonymous referees for useful suggestions.

\bibliography{root}

    \newpage

\end{document}